\documentclass[11pt]{article}
\usepackage[margin=1in]{geometry}
\usepackage{amsmath,amssymb,amsthm,mathtools,bm}
\usepackage{graphicx}
\usepackage{xcolor}
\usepackage{booktabs}
\usepackage{longtable}
\usepackage{url}
\usepackage{float}
\usepackage{etoolbox}
\usepackage[pdfencoding=auto,psdextra,hidelinks]{hyperref}
\usepackage{bookmark}
\newtheoremstyle{narrative}
  {6pt}
  {6pt}
  {\normalfont}
  {}
  {\bfseries}
  {.}
  {.5em}
  {}
\theoremstyle{narrative}
\newtheorem{definition}{Definition}[section]

\newtheorem{proposition}[definition]{Proposition}
\newtheorem{theorem}[definition]{Theorem}

\newtheorem{example}[definition]{Example}
\newcommand{\addqedtoenv}[1]{%
  \AtBeginEnvironment{#1}{\pushQED{\qed}}%
  \AtEndEnvironment{#1}{\popQED}%
}
\addqedtoenv{definition}
\addqedtoenv{lemma}
\addqedtoenv{proposition}
\addqedtoenv{theorem}
\addqedtoenv{corollary}
\addqedtoenv{example}
\newcommand{\F}{\mathbb{F}}
\newcommand{\vect}[1]{\bm{#1}}
\newcommand{\zero}{\bm{0}}
\newcommand{\HX}{H_X}
\newcommand{\HZ}{H_Z}
\newcommand{\HXB}{H_X^{\mathrm{base}}}
\newcommand{\HZB}{H_Z^{\mathrm{base}}}
\newcommand{\ind}[1]{\mathbf{1}_{#1}}
\newcommand{\inner}[2]{\left\langle #1,#2 \right\rangle}

\title{High-Girth Regular Quantum LDPC Codes from Affine-Coset Structures}
\author{Koki Okada \qquad Kenta Kasai\\Institute of Science Tokyo\\Email: \texttt{okada.k.3154@m.isct.ac.jp},\\\texttt{kenta@ict.eng.isct.ac.jp}}
\date{}

\begin{document}
\hypersetup{
  bookmarksnumbered=true,
  bookmarksopen=true,
  pdftitle={High-Girth Regular Quantum LDPC Codes from Affine-Coset Structures},
  pdfauthor={Koki Okada; Kenta Kasai}
}
\maketitle

\begin{abstract}
We construct a quantum low-density parity-check code family from a length-$512$ Calderbank--Shor--Steane base matrix pair. The base pair is permutation-equivalent to the known SPC(3) product CSS code, and the present affine-coset description gives a direct proof that both Tanner graphs are $(3,8)$-regular with girth $8$. The base code has parameters $[[512,174,8]]$. We then apply circulant permutation matrix (CPM) lifts. The main decoding experiment uses the CPM-lifted code with lift factor $P=32$, which has parameters $[[16384,4142,18\le d\le32]]$, under the code-capacity depolarizing model. Complete enumeration excludes nontrivial logical representatives of weight at most $16$ on both sides, giving $d_X,d_Z\ge18$. Explicit weight-$32$ non-stabilizer logical representatives give the upper bounds $d_X,d_Z\le32$. A belief-propagation decoder with post-processing achieved frame error rate about $10^{-8}$ at $p=0.085$; an independently observed logical residual of weight $40$ is consistent with the sharper structural bound.
\end{abstract}

\section{Introduction}
Low-density parity-check (LDPC) codes were introduced as classical error-correcting codes described by sparse parity-check matrices and Tanner graphs, and their development has been closely tied to iterative decoding \cite{gallager_1962,tanner_1981}. In quantum error correction, Calderbank--Shor--Steane (CSS) codes provide a basic way to construct quantum codes from two binary check matrices \cite{calderbank_shor_1996,steane_1996}. As in classical LDPC coding, sparsity of the check matrices and control of short cycles in the Tanner graphs remain important for quantum LDPC codes. The additional condition specific to quantum CSS codes is that the two check matrices must satisfy the CSS orthogonality condition \cite{mackay_mitchison_mcfadden_2004,babar_botsinis_alanis_ng_hanzo_2015,breuckmann_eberhardt_2021}.

The guiding idea of this work is the same basic one emphasized in \cite{kasai_2026_orthogonality_barrier}: start from the design principles that make classical LDPC codes useful \cite{gallager_1962,tanner_1981,richardson_urbanke_2001} and ask how much of them can be retained in quantum error correction. In the classical setting, one can choose sparse regular parity-check matrices, avoid short cycles in the Tanner graph, and use belief-propagation (BP) decoding as the natural iterative decoder. Bringing this recipe to CSS codes is not a matter of using the same sparse graph twice, because the two check matrices must also be orthogonal. Thus the construction problem is to keep the classical advantages of regularity, sparsity, and large girth while enforcing the quantum orthogonality condition. The present paper follows this finite-length version of that program.

Several lines of work have produced quantum LDPC codes. Finite-geometry LDPC codes and quasi-cyclic (QC) LDPC codes have been studied as explicit ways to obtain regular sparse matrices \cite{kou_lin_fossorier_2001,kamiya_fossorier_2006,kamiya_affine_qc_2007}. In the quantum setting, QC-LDPC codes are often built by replacing nonzero entries of a base matrix by circulant permutation matrix (CPM) blocks, which preserves sparsity while allowing the CSS orthogonality constraints to be imposed at finite length \cite{hagiwara_imai_2007,camara_ollivier_tillich_2007,aly_2008}. In a different direction, hypergraph product and homological product constructions have improved the asymptotic rate and distance behavior of quantum LDPC code families \cite{tillich_zemor_2014,bravyi_hastings_2014,breuckmann_eberhardt_2021}. These product constructions are designed to obtain global scaling guarantees, and they do not directly address the finite-length target of this paper: a fixed length-$512$ girth-$8$ $(3,8)$-regular base.

Recent construction work has often emphasized high-rate quantum LDPC codes. A related line of quantum error-correction work by the second author and coauthors includes non-binary QC-LDPC decoding beyond the bounded-distance limit \cite{kasai_hagiwara_imai_sakaniwa_2012_qec_bdd}, codes approaching the coding-theoretic bound \cite{komoto_kasai_2025_coding_bound}, explicit girth-$12$ quantum QC-LDPC constructions \cite{komoto_kasai_2025_qc_girth12}, non-binary error-floor mitigation \cite{kasai_2025_error_floor_nonbinary}, joint belief-propagation decoding of quantum QC-LDPC codes \cite{komoto_kasai_2025_sharp_transitions}, randomized orthogonal sparse matrix pairs \cite{okada_kasai_2025_random_construction}, and high-rate active-latent constructions \cite{kasai_2026_orthogonality_barrier}.

In particular, the construction in \cite{kasai_2026_orthogonality_barrier} starts from parent matrices with many rows and separates them into active rows, which are used as stabilizer checks, and latent rows, which are not included in the code. If orthogonality is imposed on the full parent matrices, the latent rows are also orthogonal to the active rows, and deleted low-weight rows or their low-weight combinations can become logical operators. That construction therefore imposes CSS orthogonality only where it is needed between the $X$- and $Z$-active rows, while keeping mixed active-latent interactions nonzero. This prevents latent rows from automatically entering the dual of the active code and targets the high-rate regime, in particular design rate at least one half. As a concrete example, it constructs a girth-$8$ $(3,12)$-regular quantum LDPC code. Recent work on reconfigurable atom arrays also studies high-rate quantum error correction as an implementation target \cite{zhao_2026_logical_atom_arrays}.

In contrast, this paper does not optimize for the highest possible rate. We use a girth-$8$ $(3,8)$-regular LDPC code as the CSS constituent and study a finite-length construction with design rate one quarter. This choice keeps the stabilizer weight equal to $8$ while making the base matrix, the absence of short cycles, and the lifted-code decoding behavior directly checkable.

The purpose of this paper is to give one completely explicit length-$512$ base matrix pair and verify its finite-length properties. The underlying base code is not new as a CSS code: it is permutation-equivalent to the SPC(3) product CSS code in \cite{OstrevOrsucciLazaroMatuz2024}. What we add is a direct affine-coset description that makes the Tanner-graph structure transparent. More precisely, we directly define a CSS base matrix pair with row weight $8$, column weight $3$, and then prove within this description that both Tanner graphs have girth $8$. The construction can also be viewed in finite affine geometry: the qubits are the points of the affine space $\operatorname{AG}(9,2)$, and the checks are affine $3$-flats. In this paper, however, we use only the corresponding linear-algebraic definition by affine cosets. We prove orthogonality, regularity, and absence of short cycles, and record the exact parameters of the base code. We then extend this fixed base by CPM lifts and report decoding results.

The technical development below therefore keeps the object fixed: a length-$512$ $(3,8)$-regular base, its CPM lifts, and the decoder used in the experiments. More general affine-coset systems and other base lengths are natural extensions of the same viewpoint, but they are not needed for the finite-length verification reported here.

Section~\ref{sec:base-construction} defines the base matrix pair, proves its regularity, orthogonality, and girth, and records the exact base-code parameters. Section~\ref{sec:lift-framework} describes the CPM lift and the orthogonality condition for the lifted matrices. Section~\ref{sec:decoding} gives explicit structural distance witnesses and reports the decoder and frame error rate (FER).

\section[A girth-8 (3,8)-regular base construction]{A girth-$8$ $(3,8)$-regular base construction}
\label{sec:base-construction}
This section directly defines the length-$512$ base matrix pair used as the starting point for the lift. The only objects needed are $V=\F_2^9$, three subspaces $A,B,C$, and three further subspaces $D_1,D_2,D_3$ derived from them.
Equivalently, $V$ is the point set of the finite affine space $\operatorname{AG}(9,2)$, and the checks below are incidence vectors of affine $3$-flats. We keep the affine-coset notation because it gives the shortest proof of the CSS orthogonality and girth properties used in this paper.

\subsection{Definition of the base matrix pair}
Let $V=\F_2^9$. Decompose it as a direct sum of three $3$-dimensional subspaces,
\[
V=A\oplus B\oplus C .
\]
This means that every $\vect{x}\in V$ is written uniquely as
$\vect{x}=\vect{a}+\vect{b}+\vect{c}$ with $\vect{a}\in A$, $\vect{b}\in B$, and $\vect{c}\in C$.
Choose ordered bases
\[
A=\langle \vect{a}_1,\vect{a}_2,\vect{a}_3\rangle,\qquad
B=\langle \vect{b}_1,\vect{b}_2,\vect{b}_3\rangle,\qquad
C=\langle \vect{c}_1,\vect{c}_2,\vect{c}_3\rangle .
\]
Taking one basis vector from each block, define
\[
D_1=\langle \vect{a}_1,\vect{b}_1,\vect{c}_1\rangle,\qquad
D_2=\langle \vect{a}_2,\vect{b}_2,\vect{c}_2\rangle,\qquad
D_3=\langle \vect{a}_3,\vect{b}_3,\vect{c}_3\rangle .
\]
Then $D_1,D_2,D_3$ are also $3$-dimensional subspaces, and
\[
V=D_1\oplus D_2\oplus D_3 .
\]

The qubits are indexed by the elements of $V$. Hence the block length of the base construction is
\[
n=|V|=2^9=512 .
\]

For a linear subspace $U\subseteq V$, let $V/U$ denote the set of affine cosets of $U$:
\[
V/U=\{\vect{u}+U:\vect{u}\in V\}.
\]
For a subset $S\subset V$, write $\ind{S}\in\F_2^V$ for its incidence vector.

\begin{definition}[Base matrix pair]
On the $X$ side, the $X$-checks are all affine cosets of $A,B,C$. On the $Z$ side, the $Z$-checks are all affine cosets of $D_1,D_2,D_3$. For a linear subspace $U\subseteq V$, let $M(U)$ be the incidence matrix whose rows are indexed by $V/U$ and whose columns are indexed by $V$, with
\[
M(U)_{U',\vect{x}}=
\begin{cases}
1, & \vect{x}\in U',\\
0, & \vect{x}\notin U'
\end{cases}
\qquad (U'\in V/U,\ \vect{x}\in V).
\]
All subspaces $U$ used here have dimension $3$, so $|V/U|=2^{9-3}=64$; in particular,
\[
M(A),M(B),M(C),M(D_1),M(D_2),M(D_3)\in \F_2^{64\times512}.
\]
Define the base matrix pair by
\[
\HXB=
\begin{pmatrix}
M(A)\\
M(B)\\
M(C)
\end{pmatrix},
\qquad
\HZB=
\begin{pmatrix}
M(D_1)\\
M(D_2)\\
M(D_3)
\end{pmatrix}.
\]
\end{definition}

There is some arbitrariness in this representation. The subspaces $A,B,C$ need not be a specific triple; any three $3$-dimensional subspaces satisfying $V=A\oplus B\oplus C$ may be used. The ordered bases of $A,B,C$ are also arbitrary. After the bases are chosen, the same rule defines $D_i=\langle \vect{a}_i,\vect{b}_i,\vect{c}_i\rangle$. Any two such choices are related by an invertible linear transformation of $V$, and hence give equivalent CSS codes after relabeling qubits.

To display the matrices, one must also choose an ordering of the qubits and checks. The next example fixes the ordering used in Figure~\ref{fig:base-graph} and computes $\HXB,\HZB$ in that ordering.

\begin{example}[Base matrices in the standard basis]
\label{ex:standard-basis}
We describe the concrete instance used in Figure~\ref{fig:base-graph}. Let $\vect{e}_1,\ldots,\vect{e}_9$ be the standard basis of $\F_2^9$, and set
\[
(\vect{a}_1,\vect{a}_2,\vect{a}_3,\vect{b}_1,\vect{b}_2,\vect{b}_3,\vect{c}_1,\vect{c}_2,\vect{c}_3)
=
(\vect{e}_1,\vect{e}_2,\vect{e}_3,\vect{e}_4,\vect{e}_5,\vect{e}_6,\vect{e}_7,\vect{e}_8,\vect{e}_9).
\]
Then
\[
A=\langle \vect{e}_1,\vect{e}_2,\vect{e}_3\rangle,\quad
B=\langle \vect{e}_4,\vect{e}_5,\vect{e}_6\rangle,\quad
C=\langle \vect{e}_7,\vect{e}_8,\vect{e}_9\rangle,
\]
and
\[
D_1=\langle \vect{e}_1,\vect{e}_4,\vect{e}_7\rangle,\quad
D_2=\langle \vect{e}_2,\vect{e}_5,\vect{e}_8\rangle,\quad
D_3=\langle \vect{e}_3,\vect{e}_6,\vect{e}_9\rangle.
\]

The columns are ordered as follows. A qubit is a vector
\[
\vect{x}=\sum_{i=1}^3\alpha_i\vect{a}_i+\sum_{i=1}^3\beta_i\vect{b}_i+\sum_{i=1}^3\gamma_i\vect{c}_i\in V,
\]
and we interpret
\[
(\alpha_1,\alpha_2,\alpha_3,\beta_1,\beta_2,\beta_3,\gamma_1,\gamma_2,\gamma_3)
\]
as bits in this order.

In this setting, $M(A)$ is obtained explicitly as follows. An affine coset of $A$ is obtained by fixing the $B$- and $C$-coordinates and varying only the $A$-coordinates. Hence the rows are indexed by the $64$ cosets
\[
\sum_{i=1}^3\beta_i\vect{b}_i+\sum_{i=1}^3\gamma_i\vect{c}_i+A .
\]
For such a row and a column $\vect{x}$, the entry
\[
M(A)_{\left(\sum_i\beta_i\vect{b}_i+\sum_i\gamma_i\vect{c}_i+A\right),\,\vect{x}}
\]
is equal to $1$ if and only if the $B$- and $C$-coordinates of $\vect{x}$ are the fixed values
$(\beta_1,\beta_2,\beta_3,\gamma_1,\gamma_2,\gamma_3)$ of that row. Componentwise, if a column is written as
\[
\vect{x}_{\alpha,\beta',\gamma'}
=\sum_{i=1}^3\alpha_i\vect{a}_i
+\sum_{i=1}^3\beta'_i\vect{b}_i
+\sum_{i=1}^3\gamma'_i\vect{c}_i,
\]
then
\[
M(A)_{\left(\sum_i\beta_i\vect{b}_i+\sum_i\gamma_i\vect{c}_i+A\right),\,\vect{x}_{\alpha,\beta',\gamma'}}
=
\begin{cases}
1, & \beta'_i=\beta_i,\ \gamma'_i=\gamma_i\quad (i=1,2,3),\\
0, & \text{otherwise}.
\end{cases}
\]
Since $\HXB$ is defined by stacking $M(A),M(B),M(C)$ in this order, the top $64$ rows of $\HXB$ agree with $M(A)$ entry by entry. The $A$-coordinates $(\alpha_1,\alpha_2,\alpha_3)$ are then free, so each row has eight ones. With the displayed column order, the $A$-coordinates are the lowest-order bits, and these eight ones form one consecutive block of eight columns. As the row index runs through the $B,C$-coordinates in integer order, this block moves from left to right. This gives the top block $M(A)$ shown in Figure~\ref{fig:base-graph}.

The other matrices $M(B),M(C),M(D_1),M(D_2),M(D_3)$ are formed by the same incidence rule. Stacking them as
\[
\HXB=
\begin{pmatrix}
M(A)\\
M(B)\\
M(C)
\end{pmatrix},
\qquad
\HZB=
\begin{pmatrix}
M(D_1)\\
M(D_2)\\
M(D_3)
\end{pmatrix}
\]
gives the base matrix pair displayed in Figure~\ref{fig:base-graph}.
\end{example}

The exact base code is already known from \cite{OstrevOrsucciLazaroMatuz2024}.
What is new here is the affine-coset description that gives a direct proof of
the $(3,8)$-regular and girth-$8$ properties. For completeness, Appendix
~\ref{app:spc3-equivalence} records the definition of SPC(3) and the proof that
the present base matrix pair is permutation-equivalent to it.

The next proposition records that regularity and orthogonality follow directly from the definition.

\begin{proposition}
\label{prop:base-parameters}
Both $\HXB$ and $\HZB$ belong to $\F_2^{192\times512}$. Every row has weight $8$, every column has weight $3$, and
\[
\HXB{\HZB}^{\mathsf T}=0 .
\]
\end{proposition}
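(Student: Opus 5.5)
The plan is to verify the three claims in turn directly from the definition of $M(U)$, using only that each of $A,B,C,D_1,D_2,D_3$ is a $3$-dimensional subspace of $V$.

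\textbf{Dimensions and weights.} Each $M(U)$ has $|V/U|=64$ rows and $512$ columns, so stacking three such blocks gives matrices in $\F_2^{192\times512}$. A row of $M(U)$ is the incidence vector of a coset $\vect{u}+U$, which has $|U|=8$ elements, so every row has weight $8$. The cosets of $U$ partition $V$, so each $\vect{x}$ lies in exactly one coset of $U$. Hence every column of each $M(U)$ has weight exactly $1$. Stacking three blocks then gives column weight $3$ in both $\HXB$ and $\HZB$.

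\textbf{Orthogonality.} It suffices to show $M(U)M(W)^{\mathsf T}=0$ for every $U\in\{A,B,C\}$ and $W\in\{D_1,D_2,D_3\}$. The $(U',W')$ entry equals $|U'\cap W'| \bmod 2$, where $U'=\vect{u}+U$ and $W'=\vect{w}+W$. The intersection of two affine cosets is either empty or a coset of $U\cap W$. It therefore has cardinality $0$ or $|U\cap W|=2^{\dim(U\cap W)}$.

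The key step is to show that $\dim(U\cap W)=1$ for every such pair. For example, $A\cap D_1=\langle \vect{a}_1\rangle$, and in general $A\cap D_i=\langle\vect{a}_i\rangle$, with the analogous statements for $B$ and $C$. To prove this, write an element of $D_i$ as $\lambda\vect{a}_i+\mu\vect{b}_i+\nu\vect{c}_i$. By uniqueness of the decomposition in $V=A\oplus B\oplus C$, this element lies in $A$ if and only if $\mu=\nu=0$. Consequently every intersection has size $0$ or $2$, which is even, and all entries of $\HXB{\HZB}^{\mathsf T}$ vanish.

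\textbf{Main obstacle.} The only step needing care is the affine-intersection fact. If $\vect{z}\in U'\cap W'$, then $U'=\vect{z}+U$ and $W'=\vect{z}+W$. It follows that $U'\cap W'=\vect{z}+(U\cap W)$, since for $\vect{z}+\vect{s}$ to lie in both cosets we need $\vect{s}\in U$ and $\vect{s}\in W$. Combined with the direct-sum uniqueness argument for $U\cap W$, this completes the proof. There is no genuine difficulty here.
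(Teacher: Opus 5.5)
Your proof is correct and follows the paper's argument step for step. You use coset counting for the dimensions and row weights, the coset partitions for column weight $3$, and the fact that $(\vect{u}+S)\cap(\vect{v}+T)$ is empty or a coset of the one-dimensional $S\cap T$ for orthogonality; beyond the paper, you also prove $A\cap D_i=\langle\vect{a}_i\rangle$ via the direct-sum decomposition and prove the affine-intersection identity, both of which the paper only asserts.
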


This proposition shows that the required code parameters are built into the construction. Neither regularity nor orthogonality is added later by search.

\begin{proof}
Each check is an affine coset of a $3$-dimensional subspace, so each row contains
\[
2^3=8
\]
elements. Each such subspace has
\[
2^{9-3}=64
\]
affine cosets, so each side has
\[
3\cdot64=192
\]
rows.

Fix $\vect{x}\in V$. Since the cosets of $A$ partition $V$, the point $\vect{x}$ lies in exactly one coset of $A$. The same holds for the cosets of $B$ and $C$. Hence each column of $\HXB$ has weight $3$. The same argument applies to $\HZB$ using the coset partitions of $D_1,D_2,D_3$.

For orthogonality, take an $X$-row $\vect{u}+S$ and a $Z$-row $\vect{v}+T$, where $S\in\{A,B,C\}$ and $T\in\{D_1,D_2,D_3\}$. Their inner product is
\[
\inner{\ind{\vect{u}+S}}{\ind{\vect{v}+T}}
=|(\vect{u}+S)\cap(\vect{v}+T)| \pmod 2 .
\]
For every such pair, $S\cap T$ is a one-dimensional subspace. Thus the intersection of the two affine cosets is either empty or a one-dimensional affine subspace, and has size $0$ or $2$. The inner product is therefore always zero, which proves
\[
\HXB{\HZB}^{\mathsf T}=0 .
\]
\end{proof}

We next verify that the Tanner graphs of the base matrix pair have no $4$-cycles and no $6$-cycles.

\begin{figure}[t]
  \centering
  \includegraphics[width=\textwidth]{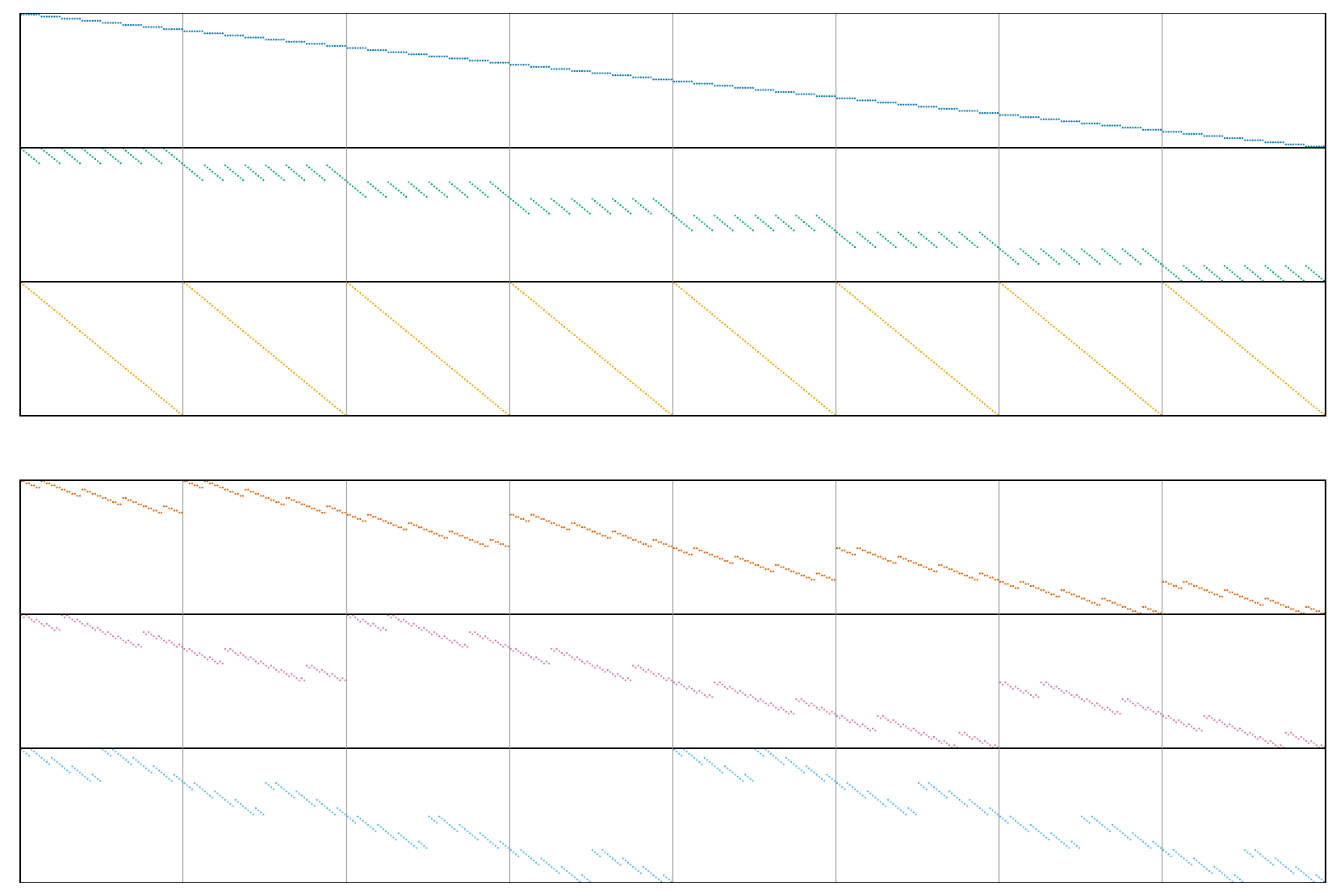}
  \caption{Visualization of the nonzero entries of the base matrices $\HXB$ and $\HZB$. The upper panel shows $\HXB$, and the lower panel shows $\HZB$. The rows of $\HXB$ are ordered as $A,B,C$, and the rows of $\HZB$ are ordered as $D_1,D_2,D_3$; the color indicates the row family. The columns are ordered by the usual integer order of the coordinates with respect to $\vect{a}_1,\vect{a}_2,\vect{a}_3,\vect{b}_1,\vect{b}_2,\vect{b}_3,\vect{c}_1,\vect{c}_2,\vect{c}_3$.}
  \label{fig:base-graph}
\end{figure}

\subsection{Local graph structure}
After defining the base matrix pair, we examine its local Tanner graph structure. For $4$-cycles we use the standard criterion that a $4$-cycle occurs exactly when two distinct rows on the same side share at least two columns \cite{tanner_1981}.

\begin{proposition}
\label{prop:no-4-cycle}
The Tanner graphs of $\HXB$ and $\HZB$ have no $4$-cycles.
\end{proposition}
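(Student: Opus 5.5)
We use the criterion quoted just before the statement: a $4$-cycle exists exactly when two distinct rows on the same side share at least two columns. So it suffices to show that any two distinct $X$-checks intersect in at most one point of $V$, and likewise for the $Z$-checks. Each check is an affine coset $\vect{u}+S$ of one of the six $3$-dimensional subspaces. The key input is the pairwise intersection pattern of these subspaces, which follows from the two direct-sum decompositions $V=A\oplus B\oplus C$ and $V=D_1\oplus D_2\oplus D_3$.

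\emph{Step 1 (same family).} Suppose two distinct rows come from the same family, i.e.\ cosets $\vect{u}+S$ and $\vect{u}'+S$ of the same subspace $S$. Since they are distinct cosets of a common subspace, they are disjoint and share no columns.

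\emph{Step 2 (different families).} Suppose the rows are $\vect{u}+S$ and $\vect{v}+S'$ with $S\neq S'$, where on the $X$ side $S,S'\in\{A,B,C\}$ and on the $Z$ side $S,S'\in\{D_1,D_2,D_3\}$. The direct-sum decomposition gives $S\cap S'=\{\zero\}$. If the two cosets intersect, then for a common point $\vect{w}$ we have $\vect{u}+S=\vect{w}+S$ and $\vect{v}+S'=\vect{w}+S'$. Hence
\[
(\vect{u}+S)\cap(\vect{v}+S')=\vect{w}+(S\cap S')=\{\vect{w}\}.
\]
So the intersection has size at most one.

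For the $Z$ side we need $D_i\cap D_j=\{\zero\}$ for $i\ne j$. This follows from $V=D_1\oplus D_2\oplus D_3$, which the paper already stated. If one wants to verify it directly, note that the nine vectors $\vect{a}_i,\vect{b}_i,\vect{c}_i$ form a basis of $V$ and are partitioned among $D_1,D_2,D_3$.

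Combining Steps 1 and 2, no two distinct rows on either side share two columns, so by the criterion neither Tanner graph has a $4$-cycle.

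The only mildly nontrivial point is the coset-intersection identity in Step 2, which is standard affine linear algebra. I do not expect any real obstacle.
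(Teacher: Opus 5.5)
Your proof is correct and follows essentially the same route as the paper: distinct cosets in one family are disjoint, and the trivial pairwise intersections $S\cap S'=\{\zero\}$ coming from the two direct-sum decompositions force two cosets from different families to meet in at most one point. The only cosmetic difference is that you use the identity $(\vect{w}+S)\cap(\vect{w}+S')=\vect{w}+(S\cap S')$, whereas the paper takes two common points and observes that their difference lies in $S\cap S'$.
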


This means that two checks on the same side never share more than one qubit.

\begin{proof}
A $4$-cycle exists if and only if two distinct rows on the same side share at least two columns.

Consider $\HXB$. Distinct cosets in the same coset family are disjoint. For two different families, for example $A$ and $B$, we have
\[
A\cap B=\{\zero\}.
\]
Hence $(\vect{u}+A)\cap(\vect{v}+B)$ is empty or a singleton. Indeed, if
\[
\vect{x},\vect{y}\in(\vect{u}+A)\cap(\vect{v}+B),
\]
then
\[
\vect{x}-\vect{y}\in A\cap B=\{\zero\},
\]
so $\vect{x}=\vect{y}$. The same argument applies to the pairs $A,C$ and $B,C$. Thus any two distinct rows of $\HXB$ share at most one column.

For $\HZB$, the same proof applies because
\[
D_1\cap D_2=D_2\cap D_3=D_3\cap D_1=\{\zero\}.
\]
\end{proof}

This result shows that the direct-sum decompositions by $A,B,C$ and by $D_1,D_2,D_3$ bound the number of common qubits between two checks by one.

\begin{proposition}
\label{prop:no-6-cycle}
The Tanner graphs of $\HXB$ and $\HZB$ have no $6$-cycles.
\end{proposition}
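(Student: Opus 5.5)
A $6$-cycle in a Tanner graph is a sequence of three distinct checks $R_1,R_2,R_3$ and three distinct qubits $\vect{x}_1,\vect{x}_2,\vect{x}_3$ such that $\vect{x}_1\in R_1\cap R_2$, $\vect{x}_2\in R_2\cap R_3$, and $\vect{x}_3\in R_3\cap R_1$. The plan is to derive a linear dependence among the direction subspaces and contradict the direct-sum decomposition.

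First, restrict the checks. Distinct cosets of the same subspace are disjoint, so any two consecutive checks in the cycle must come from different families. With three checks arranged cyclically, this forces the three checks to come from three pairwise different families. On the $X$ side they are therefore cosets $\vect{u}+A$, $\vect{v}+B$, $\vect{w}+C$ in some order, and since the cyclic order does not matter we may take $R_1=\vect{u}+A$, $R_2=\vect{v}+B$, $R_3=\vect{w}+C$.

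Next, write the differences of consecutive qubits along the cycle. Both $\vect{x}_1$ and $\vect{x}_2$ lie in $R_2$, so $\vect{x}_2-\vect{x}_1\in B$. Similarly, $\vect{x}_3-\vect{x}_2\in C$ and $\vect{x}_1-\vect{x}_3\in A$. These three differences sum to $\zero$. This gives a relation $\vect{a}+\vect{b}+\vect{c}=\zero$ with $\vect{a}\in A$, $\vect{b}\in B$, $\vect{c}\in C$. Uniqueness of the decomposition in $V=A\oplus B\oplus C$ then forces $\vect{a}=\vect{b}=\vect{c}=\zero$, so $\vect{x}_1=\vect{x}_2=\vect{x}_3$, which contradicts the distinctness of the qubits.

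The $Z$ side uses the identical argument with $D_1,D_2,D_3$ and the decomposition $V=D_1\oplus D_2\oplus D_3$, established in the definition. The only step needing care is the reduction to three distinct families. Adjacent checks in the cycle must come from different families by disjointness, and since any two of the three checks are adjacent in a $3$-cycle of checks, all three families must differ. Proposition~\ref{prop:no-4-cycle} is not needed for this argument, but together with it the result gives girth at least $8$.
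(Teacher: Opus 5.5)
Your proof is correct and follows essentially the same route as the paper: consecutive qubit differences along the cycle lie in $A$, $B$, $C$ and sum to $\zero$, so the direct sum $V=A\oplus B\oplus C$ forces the three qubits to coincide. Your version also spells out why the three checks must come from three pairwise different families (same-family cosets are disjoint), which the paper leaves implicit; the paper cites Proposition~\ref{prop:no-4-cycle} at that point, but you are right that it is not needed.
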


This proposition excludes a configuration in which three checks meet pairwise in three different qubits. Together with Proposition~\ref{prop:no-4-cycle}, it implies that the girth of the base Tanner graphs is at least $8$.

\begin{proof}
We prove the claim for $\HXB$. The proof for $\HZB$ is identical after replacing $(A,B,C)$ by $(D_1,D_2,D_3)$.

By Proposition~\ref{prop:no-4-cycle}, there are no $4$-cycles. If a $6$-cycle existed, there would be three rows
\[
R_A=\vect{u}+A,\qquad
R_B=\vect{v}+B,\qquad
R_C=\vect{w}+C
\]
whose pairwise intersections are three distinct points. Write
\[
\vect{x}\in R_A\cap R_B,\qquad
\vect{y}\in R_B\cap R_C,\qquad
\vect{z}\in R_C\cap R_A .
\]

Since $\vect{x},\vect{y}\in R_B$, we have $\vect{x}+\vect{y}\in B$. Similarly,
\[
\vect{y}+\vect{z}\in C,\qquad
\vect{z}+\vect{x}\in A .
\]
On the other hand,
\[
(\vect{x}+\vect{y})+(\vect{y}+\vect{z})+(\vect{z}+\vect{x})=\zero .
\]
Because
\[
V=A\oplus B\oplus C
\]
is a direct sum, the sum of one vector from each of $A,B,C$ can be zero only if all three components are zero. Hence
\[
\vect{x}+\vect{y}=\zero,\qquad
\vect{y}+\vect{z}=\zero,\qquad
\vect{z}+\vect{x}=\zero,
\]
and therefore
\[
\vect{x}=\vect{y}=\vect{z}.
\]
This contradicts the assumption that the three pairwise intersections are distinct points.
\end{proof}

Propositions~\ref{prop:no-4-cycle} and~\ref{prop:no-6-cycle} show that the base Tanner graphs have neither $4$-cycles nor $6$-cycles. Thus the lift in this paper is not used to remove short cycles from the base graph; it is used to increase the block length while preserving CSS orthogonality.

\begin{theorem}
\label{thm:base-summary}
The pair $(\HXB,\HZB)$ is a CSS base matrix pair over $\F_2^{192\times512}$. Every row has weight $8$, every column has weight $3$, and both Tanner graphs have girth $8$.
\end{theorem}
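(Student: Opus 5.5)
Most of the statement is already in hand from earlier results. Proposition~\ref{prop:base-parameters} gives the dimensions $192\times512$, row weight $8$, column weight $3$, and $\HXB{\HZB}^{\mathsf T}=0$, so $(\HXB,\HZB)$ is a CSS pair. Propositions~\ref{prop:no-4-cycle} and~\ref{prop:no-6-cycle} exclude $4$- and $6$-cycles. Since Tanner graphs are bipartite, every cycle has even length, so the girth is at least $8$. What remains is to show the girth is exactly $8$, i.e.\ to exhibit an $8$-cycle in each Tanner graph.

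For $\HXB$ I will use only two coset families, $A$ and $B$. Pick nonzero $\vect{a}\in A$ and $\vect{b}\in B$ and consider the four points
\[
\vect{x}_1=\zero,\quad \vect{x}_2=\vect{a},\quad \vect{x}_3=\vect{a}+\vect{b},\quad \vect{x}_4=\vect{b},
\]
together with the four checks $A$, $\vect{b}+A$, $B$, $\vect{a}+B$. The incidences are as follows:
\begin{itemize}
\item $\vect{x}_1,\vect{x}_2$ both lie in $A$;
\item $\vect{x}_2,\vect{x}_3$ both lie in $\vect{a}+B$;
\item $\vect{x}_3,\vect{x}_4$ both lie in $\vect{b}+A$;
\item $\vect{x}_4,\vect{x}_1$ both lie in $B$.
\end{itemize}
This gives the closed walk $\vect{x}_1 - A - \vect{x}_2 - (\vect{a}+B) - \vect{x}_3 - (\vect{b}+A) - \vect{x}_4 - B - \vect{x}_1$ of length $8$.

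To confirm this walk is a genuine cycle, I check that all vertices are distinct.
\begin{itemize}
\item The four points are distinct because $\vect{a},\vect{b}\neq\zero$ and $A\cap B=\{\zero\}$.
\item $A\neq\vect{b}+A$ because $\vect{b}\notin A$.
\item $B\neq\vect{a}+B$ because $\vect{a}\notin B$.
\item Cosets from different families are different rows by definition.
\end{itemize}
The same construction for $\HZB$ uses $D_1,D_2$ with, for example, $\vect{a}=\vect{a}_1\in D_1$ and $\vect{b}=\vect{a}_2\in D_2$, and $D_1\cap D_2=\{\zero\}$ gives distinctness. Combining this with the lower bound proves girth exactly $8$ for both graphs.

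There is no real obstacle here. The only step needing care is the distinctness check, so that the $8$-walk is not degenerate, and the trivial-intersection facts already used in Proposition~\ref{prop:no-4-cycle} settle it.
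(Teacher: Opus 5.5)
Your proposal is correct and follows essentially the same route as the paper: it cites Propositions~\ref{prop:base-parameters}, \ref{prop:no-4-cycle} and~\ref{prop:no-6-cycle} for everything except the existence of an $8$-cycle. It then exhibits the same cycle through $\zero,\vect{a},\vect{a}+\vect{b},\vect{b}$ and the checks $A,\ \vect{a}+B,\ \vect{b}+A,\ B$, traversed in the opposite direction. Your explicit check that the eight vertices are distinct is a small detail the paper leaves implicit.
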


This theorem summarizes the base matrix pair used in Section~\ref{sec:lift-framework}.

\begin{proof}
Row weight, column weight, and the CSS orthogonality condition are given by Proposition~\ref{prop:base-parameters}. The absence of $4$-cycles and $6$-cycles follows from Propositions~\ref{prop:no-4-cycle} and~\ref{prop:no-6-cycle}.

It remains only to note that $8$-cycles exist. On the $\HX$ side, choose $0\neq a\in A$ and $0\neq b\in B$. Then
\[
A\;-\;0\;-\;B\;-\;b\;-\;(b+A)\;-\;(a+b)\;-\;(a+B)\;-\;a\;-\;A
\]
is an $8$-cycle. Similarly, on the $\HZ$ side, for $0\neq u\in D_1$ and $0\neq v\in D_2$,
\[
D_1\;-\;0\;-\;D_2\;-\;v\;-\;(v+D_1)\;-\;(u+v)\;-\;(u+D_2)\;-\;u\;-\;D_1
\]
is an $8$-cycle. Hence both Tanner graphs have girth exactly $8$.
\end{proof}

For reference, the CSS code defined by this base matrix pair has parameters
\[
[[512,174,8]] .
\]
This base code is permutation-equivalent to the SPC(3) product CSS code in
\cite{OstrevOrsucciLazaroMatuz2024}; see Appendix
~\ref{app:spc3-equivalence}. In that sense, the exact base-code
parameters are not new. What is proved in this section is the Tanner-graph
statement needed later in this paper: the present affine-coset description gives
a $(3,8)$-regular base pair whose two Tanner graphs have girth $8$. That girth
statement is established here from Propositions~\ref{prop:no-4-cycle}
and~\ref{prop:no-6-cycle}; it is not imported from \cite{OstrevOrsucciLazaroMatuz2024}.

Thus the base code has $k=174$ logical qubits and exact minimum distance $d=8$. The rank and distance calculation is not used in the lift construction below, so we record this fact without proof.
The parameters $[[n,k,d]]$ of this base code can be computed exactly from its tensor-product structure, but that computation is independent of the CPM lift and decoding experiments studied here.

\section{CPM lift}
\label{sec:lift-framework}
This section lifts the girth-$8$ $(3,8)$-regular base from Section~\ref{sec:base-construction} to longer codes while preserving CSS orthogonality. We use the standard notation for quasi-cyclic codes and replace each nonzero entry by a CPM. This is the usual circulant lifting framework in classical QC-LDPC and protograph-based constructions, and we follow the same block-replacement viewpoint here \cite{kamiya_fossorier_2006,MitchellSmarandacheCostello2014,hagiwara_imai_2007,camara_ollivier_tillich_2007}.

The base matrix pair is
\[
\HXB\in\F_2^{192\times512},\qquad
\HZB\in\F_2^{192\times512}.
\]
In a $P$-lift, each nonzero entry is replaced by a $P\times P$ CPM. Each side has
\[
512\cdot3=192\cdot8=1536
\]
nonzero entries, so a CPM lift is specified by $1536$ shift labels on each side. Rather than writing these labels as an unstructured list, we use edge-label functions on the support sets
\[
E_X:=\{(r,\vect{v}):\HXB(r,\vect{v})=1\},\qquad
E_Z:=\{(s,\vect{v}):\HZB(s,\vect{v})=1\}.
\]
Let
\[
x:E_X\to\mathbb Z_P,\qquad z:E_Z\to\mathbb Z_P
\]
be the two label functions. If $I_P(t)$ denotes the CPM corresponding to $t\in\mathbb Z_P$, define
\[
x_{r,\vect{v}}:=x(r,\vect{v}),\qquad
z_{s,\vect{v}}:=z(s,\vect{v})
\]
and replace the nonzero entries by
\[
H_X^{(P)}(r,\vect{v})=I_P(x_{r,\vect{v}}),\qquad
H_Z^{(P)}(s,\vect{v})=I_P(z_{s,\vect{v}}).
\]
This operation preserves row weight $8$ and column weight $3$, and the block length becomes $512P$.

The orthogonality condition is simple because of the two-point intersection structure of the base. In this base, the intersection of any $X$-row and any $Z$-row is either empty or consists of two points. If
\[
N_X(r)\cap N_Z(s)=\{\vect{v}_1,\vect{v}_2\},
\]
then the lifted rows are orthogonal over $\F_2$ whenever
\[
x_{r,\vect{v}_1}-z_{s,\vect{v}_1}
\equiv
x_{r,\vect{v}_2}-z_{s,\vect{v}_2}
\pmod P .
\]
All lifted matrix pairs used in this paper are CPM lifts satisfying this local congruence condition.

\section{Decoding experiments}
\label{sec:decoding}
This section gives explicit distance witnesses and reports decoding experiments for the codes obtained by the CPM lift in the previous section. We record the code parameters used in the experiments, the decoder input and success criterion, and the FER. The distance $d=8$ recorded in Section~\ref{sec:base-construction} is the exact distance of the base CSS code before lifting.

\subsection{Code parameters used in the decoding experiments}
We first record the basic parameters of the lifted codes used in the decoding experiments. After a $P$-lift, the number of physical qubits is
\[
n=512P,
\]
and the number of check rows on each side is $192P$. If the checks are counted as independent, the design dimension is
\[
512P-2\cdot 192P=128P,
\]
so the design rate is one quarter. In this paper we accept this design rate in order to keep column weight $3$, row weight $8$, girth $8$ at the base level, and stabilizer weight $8$ after lifting. Row reduction over $\F_2$ on the lifted parity-check matrices used in the experiment gives
\[
\operatorname{rank} H_X=\operatorname{rank} H_Z=192\cdot32-23=6121
\]
for the $P=32$ code used below. Hence the number of encoded qubits in this code is
\[
k=16384-2\cdot6121=4142.
\]
Therefore the code used in Figure~\ref{fig:fer} has
\[
[[16384,4142,18\le d\le32]].
\]
Complete enumeration on the fixed lifted matrices excludes nontrivial logical representatives of every even weight at most $16$ on both sides. Since all columns have weight three, kernel vectors have even weight. Hence $d_X,d_Z\ge18$; the method and completeness argument are given in Appendix~\ref{sec:distance-enumeration}. Together with the following explicit weight-$32$ representatives, this gives $18\le d_X,d_Z,d\le32$.

Index a lifted qubit by $(v,t)$, where $v\in\{0,\ldots,511\}$ is the base-column index and $t\in\mathbb Z_{32}$ is the fiber coordinate. Let $K=\{0,8,16,24\}$. An $X$-type representative is supported on
\[
\bigcup_{(v,r)\in S_X}\{(v,r+k):k\in K\},
\]
\[
S_X=\{(64,0),(68,7),(72,1),(76,7),(80,5),(84,5),(88,6),(92,6)\},
\]
and a $Z$-type representative is supported on
\[
\bigcup_{(v,r)\in S_Z}\{(v,r+k):k\in K\},
\]
\[
\begin{aligned}S_Z=\{&(134,0),(135,5),(142,4),(143,7),\\
&(262,0),(263,2),(270,6),(271,5)\}.\end{aligned}
\]
with all fiber coordinates reduced modulo $32$. Direct binary computation gives zero syndrome for each representative. Moreover,
\[
\begin{aligned}
\operatorname{rank}H_X&=6121,
&\operatorname{rank}\begin{bmatrix}H_X\\ \vect{x}\end{bmatrix}&=6122,\\
\operatorname{rank}H_Z&=6121,
&\operatorname{rank}\begin{bmatrix}H_Z\\ \vect{z}\end{bmatrix}&=6122,
\end{aligned}
\]
where $\vect{x}$ and $\vect{z}$ are the displayed $X$- and $Z$-type representatives, respectively. Thus neither vector is a stabilizer, and
\[
d_X\le32,\qquad d_Z\le32,\qquad d\le32.
\]

\subsection{Decoder and noise model}
All simulations in this section use the code-capacity depolarizing model. In each trial, a Pauli error is sampled on the data qubits, the syndrome is computed perfectly from that error, and the decoder is given this single syndrome. We do not simulate the phenomenological model: there are no measurement errors, no noisy syndrome bits, and no repeated syndrome-extraction rounds.

Write a Pauli error as $(\vect{x},\vect{z})$. The observed syndromes are
\[
  \vect{s}_X=H_X\vect{z},\qquad
  \vect{s}_Z=H_Z\vect{x}.
\]
Let the decoder output be $(\tilde{\vect{x}},\tilde{\vect{z}})$. We say that decoding succeeds if
\[
\begin{gathered}
  H_X\tilde{\vect{z}}=\vect{s}_X,\qquad
  H_Z\tilde{\vect{x}}=\vect{s}_Z,\\
  \tilde{\vect{x}}+\vect{x}\in\mathrm{row}(H_X),\qquad
  \tilde{\vect{z}}+\vect{z}\in\mathrm{row}(H_Z).
\end{gathered}
\tag{SC}
\]
The first line says that the output satisfies the observed syndrome. The second line says that the difference from the true error is a stabilizer. Indeed, if the second line holds, then $\tilde{\vect{x}}+\vect{x}$ is a product of $X$-stabilizers and $\tilde{\vect{z}}+\vect{z}$ is a product of $Z$-stabilizers. Stabilizers act trivially on the code space, so such an output has the same logical class as the true error. The CSS orthogonality conditions also imply $\mathrm{row}(H_X)\subseteq\ker H_Z$ and $\mathrm{row}(H_Z)\subseteq\ker H_X$, so the row-space conditions are consistent with the syndrome equations.

The decoder used here is an experimental decoder. It first runs BP and then repairs the BP estimate by post-processing. BP itself is used as a standard iterative decoder, and we do not describe its update equations. The post-processing is based on OSD for classical soft-decision decoding \cite{fossorier_lin_1995} and BP+OSD decoding for quantum LDPC codes \cite{roffe_white_burton_campbell_2020}. It repairs the remaining unsatisfied syndrome on sets of low-reliability qubits. Let the BP output be $(\hat{\vect{x}},\hat{\vect{z}})$. The decoder receives only $H_X,H_Z$, the observed syndrome, and the error rate $p$. The true noise $(\vect{x},\vect{z})$ generated in the simulation is not used by the decoder. It is used only after decoding to test condition (SC) and classify logical errors. The implementation and the $P=32$ CPM-lifted code data used for the reported experiments are available in a public software repository \cite{kasai_quantum_ldpc_decoder_2026}.

In the implementation, after BP and after each post-processing step, the current estimate is denoted by $(\hat{\vect{x}},\hat{\vect{z}})$ and condition (SC) is tested in the following order. First we check
\[
  H_X\hat{\vect{z}}=\vect{s}_X,\qquad
  H_Z\hat{\vect{x}}=\vect{s}_Z .
\]
If this fails, some syndrome remains and the instance is passed to post-processing. If the syndrome equations hold, we then test whether
\[
  \hat{\vect{x}}+\vect{x},\qquad
  \hat{\vect{z}}+\vect{z}
\]
belong to the row spaces of $H_X$ and $H_Z$, respectively. If they do, condition (SC) holds and decoding is counted as successful. Otherwise, the output is counted as a nontrivial logical error. The FER in Figure~\ref{fig:fer} counts both nonconvergent failures with residual syndrome and syndrome-satisfying logical errors.

Post-processing is called only when a residual syndrome remains. It tries the following steps in order.

\begin{enumerate}
\item \textbf{Residual syndrome setup.}
To repair the $X$ component, define the residual syndrome
\[
  \vect{r}_X:=\vect{s}_Z+H_Z\hat{\vect{x}}.
\]
To repair the $Z$ component, define
\[
  \vect{r}_Z:=\vect{s}_X+H_X\hat{\vect{z}}.
\]
For the $X$ component, the goal is to find an additional correction $\vect{\delta}$ satisfying
\[
  H_Z\vect{\delta}=\vect{r}_X
\]
and then update $\hat{\vect{x}}$ to $\hat{\vect{x}}+\vect{\delta}$. The $Z$ component is treated analogously by solving $H_X\vect{\delta}=\vect{r}_Z$.

The candidate correction is chosen using the reliabilities produced by BP. For each qubit $v$, let $c_v^X$ be the cost of flipping the $X$ component and $c_v^Z$ the cost of flipping the $Z$ component. Smaller cost means that BP regards the flip as more plausible. For $X$-component repair, among candidates satisfying $H_Z\vect{\delta}=\vect{r}_X$, we choose one minimizing
\[
  C_X(\vect{\delta})=\sum_{v:\delta_v=1}c_v^X .
\]
For $Z$-component repair, we minimize
\[
  C_Z(\vect{\delta})=\sum_{v:\delta_v=1}c_v^Z .
\]
Below, for one side, we write the check matrix as $H$, the residual syndrome as $\vect{r}$, and the current estimate as $\hat{\vect{u}}$. Order the qubits from least reliable to most reliable as
\[
  v_1,v_2,\ldots,v_n,
\]
where the first positions are the least reliable qubits. For an integer $m$, define
\[
  K_m:=\{v_1,\ldots,v_m\}.
\]
Also, $H_{*,K_m}$ denotes the submatrix of $H$ obtained by keeping all check rows and only the columns in $K_m$. The symbol $*$ means that no check rows are removed. The vector $\vect{\delta}_{K_m}$ is the correction restricted to $K_m$, and it is extended by zero outside $K_m$.

\item \textbf{$K_m$ bisection OSD.}
On the candidate set $K_m$, we solve
\[
  H_{*,K_m}\vect{\delta}_{K_m}=\vect{r}
\]
over $\F_2$. This linear system asks whether the remaining syndrome can be removed by flipping only the unreliable qubits in $K_m$. The implementation uses bit-packed Gaussian elimination and always recomputes the syndrome after applying a correction.

The size of $K_m$ is chosen by bisection. Such post-processing, where the candidate-set size is adjusted by solvability, is used as a low-complexity BP post-processing method \cite{kasai_2026_orthogonality_barrier}. We first find the smallest solvable size
\[
  m_{\mathrm{sol}}:=\min\{m:\ H_{*,K_m}\vect{\delta}_{K_m}=\vect{r}\ \text{is solvable}\}.
\]
If the solution is unique, that solution is used as the candidate correction. If the solution is not unique, syndrome information alone does not determine which correction to choose. However, if the difference between any two solutions is always a stabilizer, then all solutions have the same logical action on the code space.

More explicitly, write the solution space as
\[
  \vect{\delta}_0+\mathcal N,\qquad
  \mathcal N:=\{\vect{\eta}:H_{*,K_m}\vect{\eta}=0\}.
\]
For any two solutions $\vect{\delta}_1,\vect{\delta}_2$,
\[
  \vect{\delta}_1+\vect{\delta}_2\in\mathcal N .
\]
When repairing the $X$ component, the vector $\vect{\delta}$ is an $X$-type additional correction. If every $\vect{\eta}\in\mathcal N$ lies in $\mathrm{row}(H_X)$, then any two corrections differ only by an $X$-stabilizer. They can therefore be treated as the same logical correction. Similarly, when repairing the $Z$ component, we test whether $\mathcal N\subseteq\mathrm{row}(H_Z)$. If this condition holds, the solution is unique modulo stabilizers, and we choose from the solution space the vector with minimum flip cost $C_X$ or $C_Z$. The minimized quantity is not Hamming weight but the BP reliability-based cost. A low-Hamming-weight correction may flip highly reliable qubits and can be less likely than a slightly higher-weight correction on unreliable qubits.

The linear equation guarantees only that the corrected estimate satisfies the syndrome. Decoding success still requires the row-space part of condition (SC), namely that the difference from the true error is a stabilizer. The stabilizer-modulo test is used only to ensure that, when OSD has multiple solutions, those solutions represent the same logical class.

\item \textbf{Local OSD.}
If $K_m$ bisection OSD does not repair the residual syndrome, we try a smaller OSD restricted to the neighborhood of the unsatisfied checks. The candidate set consists of the checks with nonzero residual syndrome, their adjacent qubits, and, when needed, checks one step farther away. On this candidate set we solve the same linear equation and accept only corrections that remove the syndrome. This step is intended for local residuals and avoids building a large candidate set when the remaining syndrome is already localized.

\item \textbf{Restricted-candidate fallback.}
If local OSD is still insufficient, we do not expand the candidate set to all qubits. Large candidate sets make Gaussian elimination and solution-space tests expensive, and they also tend to create many nonunique solutions. The fallback used here is a restricted auxiliary repair: it is called only when the number of remaining unsatisfied checks is small, and it searches only predetermined small supports.

We try three types of supports. First, we connect nearby unsatisfied checks on the base graph and close the corresponding neighborhood in the lift, producing path-closure candidates. This targets residuals that remain along short path-like structures. Second, we build candidates that jointly treat two unsatisfied checks coming from the same base row or the same base check family. This handles residuals that may appear separated in the lift but share a simple local structure in the base. Third, if the residual matches a small trapping-set template, we search only on that template support.

This fallback is not maximum-likelihood decoding. It is a local repair designed to keep the computation bounded. In every case, the true noise is not used; the candidate tests use only the observed syndrome and the residual syndrome computed from the current estimate.

\item \textbf{Re-BP and joint repair.}
If only one side is improved, we run a short BP pass once more using the updated estimate as the initial state. This uses the fact that removing the syndrome on one side can change the reliability order and the local residual structure on the other side. If unsatisfied checks remain on both the $X$ and $Z$ sides, we try a small joint repair. The joint repair forms a candidate set from the union of the neighborhoods of both residual syndromes and tests updates of the $X$ and $Z$ components together. A candidate is accepted only if it satisfies both syndromes simultaneously.
\end{enumerate}

This decoder is not a maximum-likelihood decoder. It repairs the BP output using low-reliability sets and local candidates. Thus FER is not an exact distance estimate. However, if the final output satisfies the syndrome and the residual is not equivalent to a stabilizer, then that residual is evidence of a logical error and gives a distance upper bound.

\subsection{FER}
Figure~\ref{fig:fer} plots the FER for the decoder described above under the code-capacity depolarizing model. Each point is obtained by decoding from the observed perfect syndrome and the error rate $p$ only; the true noise is not provided to the decoder.

At the leftmost point, $p=0.085$, we observed $23$ final failures in $2{,}310{,}469{,}200$ trials. One of them satisfied the syndrome after post-processing, but the difference from the true error was not in the stabilizer row space. In that event, the $X$ residual was zero and the $Z$ residual had weight $40$. This is an independently observed logical residual, although it does not improve the structural upper bound $d\le32$ above. The remaining $22$ failures had residual syndrome and do not by themselves give a distance upper bound. Many of them are local nonconvergent instances with small residual syndrome weight and may be corrected by stronger post-processing. In the FER reported here, they are still counted as final failures of the decoder used in this paper.

The line labeled \emph{DE reference: (3,8)-regular} shows, as a reference line, the asymptotic threshold obtained from DE for BP on a locally tree-like sparse-graph ensemble with column weight $3$, row weight $8$, and the four-state prior distribution of the code-capacity depolarizing channel \cite{richardson_urbanke_2001}. The random ensemble used in DE is a reference ensemble and need not satisfy the finite-length CSS orthogonality constraints imposed in this paper. We plot the threshold as a black dash-dotted line at $p_{\mathrm{DE}}\simeq0.1009$. This line is not a finite-length performance guarantee for the lifted code; it is a reference value for interpreting the FER curve.

The comparison should be read with this difference in randomness in mind. A fully random sparse-graph LDPC ensemble has independent local choices up to the degree constraints, while our code is obtained from one fixed CSS-orthogonal base and CPM shift labels that must satisfy the lift orthogonality conditions. Thus the CPM construction is much more constrained than the ensemble used in DE. Nevertheless, the observed FER curve is close to the DE reference on the scale shown in Figure~\ref{fig:fer}. This suggests that, for the present finite length and decoder, the randomness supplied by CPM shift labels may already be sufficient to obtain much of the behavior predicted for random $(3,8)$-regular sparse graphs. This observation does not prove an asymptotic threshold and does not remove the CSS constraints, but it gives a reason to compare CPM lifts carefully before moving to more general permutation lifts.

\begin{figure}[t]
  \centering
  \includegraphics[width=\textwidth]{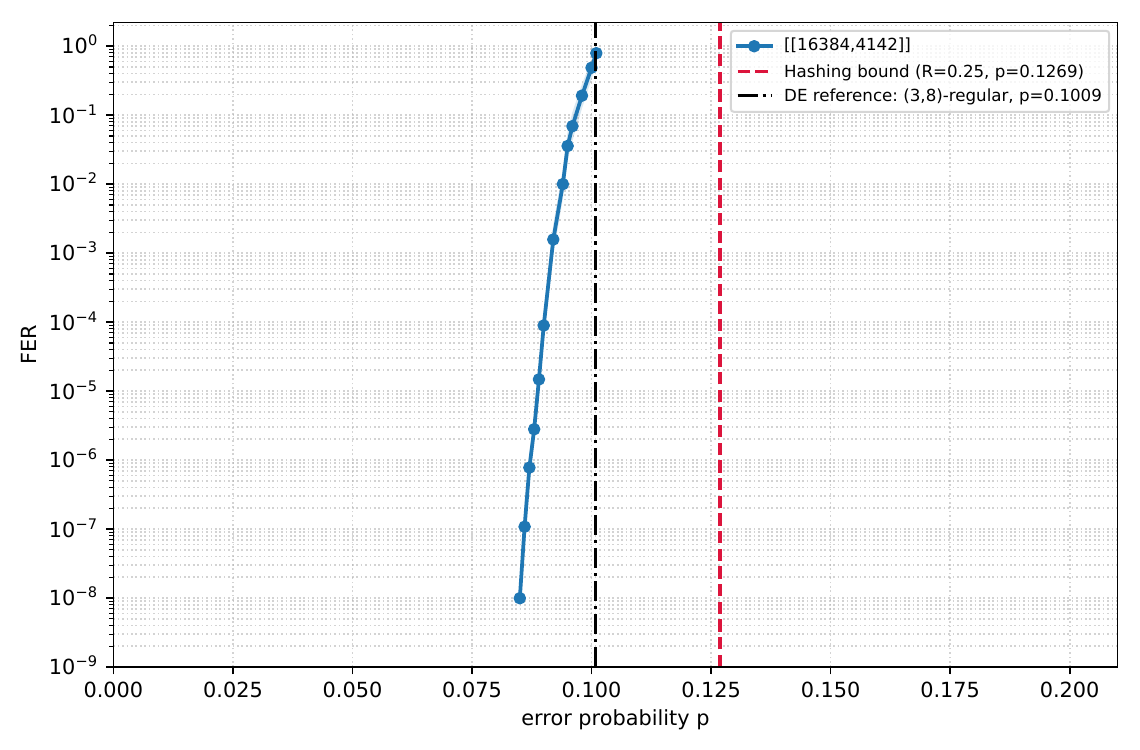}
  \caption{FER for the $P=32$ CPM-lifted code $[[16384,4142,18\le d\le32]]$ under the code-capacity depolarizing model. The curve shows the BP decoder with post-processing, and the shaded band gives the $95\%$ Wilson confidence interval. The red dashed line is the hashing bound for rate $1/4$, and the black dash-dotted line is a $(3,8)$-regular DE reference for a random sparse-graph ensemble, not a threshold for a CSS-orthogonal ensemble.}
  \label{fig:fer}
\end{figure}

\section{Future work}
This paper fixes one length-$512$ girth-$8$ $(3,8)$-regular CSS base matrix pair and studies its CPM lifts. The construction and the decoding experiment are finite-length and explicit. The main open questions are how the distance behaves after lifting, how far the decoder can be improved, and how much freedom remains when the CPM restriction is relaxed.

The first direction is to close the remaining distance interval for the $P=32$ code. Complete support enumeration and explicit logical representatives give $18\le d_X,d_Z,d\le32$. Extending the completed enumeration beyond weight $16$, or finding lighter non-stabilizer representatives, would tighten this interval. The independently observed weight-$40$ decoder residual does not improve the structural upper bound.

The second direction is decoder improvement. The failures observed in this paper have two different meanings. If a residual syndrome remains, the decoder has not yet found any correction consistent with the observed syndrome; this calls for stronger local or global syndrome repair. If the syndrome is satisfied but the difference from the true error is not a stabilizer, then the decoder has chosen the wrong logical class among syndrome-compatible corrections; this calls for better selection among nonunique OSD solutions, taking equivalence modulo the stabilizer row space into account. The next step is to analyze the residual-error structures behind these two failure types, classify the local residual syndromes and wrong logical-class choices that occur, and use this classification to design new post-processing decoders. Such a decoder should improve both the construction of candidate sets and the selection of representatives modulo the stabilizer row space.

The third direction is to go beyond CPM lifts. CPM labels form an additive cyclic group, so the orthogonality and short-cycle constraints can be written as congruence conditions. Replacing this cyclic group by affine permutation matrix lifts or by larger subgroups of $S_P$ would introduce more randomness into the lift and may make it easier to avoid unfavorable lifted structures. The cost is that the corresponding constraints become noncommutative and harder to check explicitly. At the same time, the CPM lift studied here already gives finite-length FER behavior close to the DE reference for a random sparse-graph LDPC ensemble. It may therefore be that CPM lifts already provide enough randomness for the finite-length regime considered in this paper. A useful next step is to determine when the additional randomness of more general permutation lifts gives a real decoding or distance advantage.

The fourth direction is to vary the base code. Since the base used in this paper is permutation-equivalent to SPC(3), this direction should be understood as a systematic study of nearby SPC-product and affine-coset bases, not as a claim that SPC-type generalizations are new. One possibility is to start from a shorter base graph of girth $6$ and choose a lift whose Tanner graph has girth at least $8$, aiming at shorter finite-length codes. Another is to use the affine-coset viewpoint to compare different degrees and finite-affine-geometric descriptions, and to identify which choices preserve the regularity, orthogonality, and lift constraints used here. These extensions would clarify which properties are specific to the present length-$512$ SPC(3) base and which persist in a broader class of high-girth regular quantum LDPC codes.

\section{Conclusion}
We gave an explicit affine-coset description of a length-$512$ CSS base matrix pair over $V=\F_2^9$. The underlying base CSS code is permutation-equivalent to the known SPC(3) product CSS code. The two matrices have $192$ rows on each side, row weight $8$, column weight $3$, satisfy CSS orthogonality, and have Tanner graphs of girth $8$ on both sides. The associated base CSS code has parameters $[[512,174,8]]$; in this paper we record these exact parameters but do not include their tensor-product proof.

We then lifted this fixed base by replacing each nonzero entry with a CPM. The two-point intersection structure of the base gives a local congruence condition on the shift labels, and this condition preserves CSS orthogonality after lifting. Thus the lifted codes keep stabilizer weight $8$ and block length $512P$ while remaining directly specified by finite shift-label data.

For the $P=32$ code, with parameters $[[16384,4142,18\le d\le32]]$, complete enumeration gives $d_X,d_Z\ge18$, and explicit weight-$32$ non-stabilizer logical representatives give $d_X,d_Z\le32$. We also reported FER data under the code-capacity depolarizing model using BP with post-processing. The decoder uses the observed syndrome and the error rate, but not the true error. At $p=0.085$, the FER is at the $10^{-8}$ scale, and one syndrome-satisfying logical residual of weight $40$ was observed independently. The result is a finite-length, reproducible construction and experiment for a high-girth regular quantum LDPC code, with exact distance analysis and stronger post-processing left as the next technical steps.

\appendix
\section{SPC(3) and the base-code equivalence}
\label{app:spc3-equivalence}

This appendix records the definition of SPC(3) from
\cite{OstrevOrsucciLazaroMatuz2024} and the proof that the base construction in
Section~\ref{sec:base-construction} is permutation-equivalent to it.

\begin{definition}[SPC(3)]
In \cite{OstrevOrsucciLazaroMatuz2024}, SPC(3) is the case $(D,s)=(3,1)$ of
their CSPC$(D,s)$ family, that is, the $3$-fold product CSS code built from the
length-$2$ single-parity-check code. Its qubits are indexed by
\[
(\alpha,\beta,\gamma)\in \F_2^3\times\F_2^3\times\F_2^3\cong \F_2^9,
\]
where $\alpha=(\alpha_1,\alpha_2,\alpha_3)$, $\beta=(\beta_1,\beta_2,\beta_3)$,
and $\gamma=(\gamma_1,\gamma_2,\gamma_3)$.

The three $X$-check families are the $8$-point sets obtained by fixing two of
the three blocks $(\alpha,\beta,\gamma)$ and varying the remaining block
freely. The three $Z$-check families are the $8$-point sets obtained by fixing
all coordinates except one aligned coordinate position across the three blocks
and varying those three coordinates freely.
\end{definition}

\begin{theorem}[Equivalence to SPC(3)]
\label{thm:spc3-identification}
The base matrix pair defined in Section~\ref{sec:base-construction} is
permutation-equivalent to the SPC(3) product CSS code of
\cite{OstrevOrsucciLazaroMatuz2024}. More precisely, in the standard-basis
realization of Example~\ref{ex:standard-basis}, the $X$-check families
$A,B,C$ coincide with the three $X$-check families of SPC(3), and the
$Z$-check families $D_1,D_2,D_3$ coincide with the three $Z$-check families of
SPC(3). Hence the resulting CSS code is the SPC(3) code up to the ordering of
qubits and checks. For an arbitrary choice of $A,B,C$ and ordered bases, the
resulting base code is obtained from this standard-basis realization by a qubit
permutation.
\end{theorem}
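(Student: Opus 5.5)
The proof has two parts. First we identify the standard-basis realization of Example~\ref{ex:standard-basis} with SPC(3) directly. Then we reduce an arbitrary choice of $A,B,C$ and ordered bases to that realization by a linear change of coordinates.

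\emph{Standard basis.} We use the coordinates $(\alpha,\beta,\gamma)$ of Example~\ref{ex:standard-basis} as the qubit labels of SPC(3). This is the identity bijection $V\to\F_2^3\times\F_2^3\times\F_2^3$, $\vect{x}\mapsto(\alpha,\beta,\gamma)$.

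A coset $\vect{u}+A$ consists of all points whose $(\beta,\gamma)$-coordinates equal those of $\vect{u}$, with $\alpha$ free. This is exactly an $X$-check of the first SPC(3) family, in which the $\beta$ and $\gamma$ blocks are fixed and $\alpha$ varies. The cosets of $A$ run over all $64$ values of $(\beta,\gamma)$, so the coset family of $A$ coincides with the whole first family. The same holds for $B$ and $C$.

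For $D_i=\langle \vect{e}_i,\vect{e}_{i+3},\vect{e}_{i+6}\rangle$, a coset $\vect{u}+D_i$ fixes every coordinate except $(\alpha_i,\beta_i,\gamma_i)$, and those three vary freely. This is precisely the SPC(3) $Z$-check family for the aligned position $i$. Again, $64$ cosets match $64$ checks.

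Hence, as sets of supports, the row sets of $\HXB$ and $\HZB$ equal the SPC(3) $X$- and $Z$-check sets. The matrices therefore agree up to ordering of rows and columns, and the CSS codes coincide up to qubit relabeling.

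\emph{General choice.} Given any decomposition $V=A\oplus B\oplus C$ with ordered bases $(\vect{a}_i),(\vect{b}_i),(\vect{c}_i)$, the nine vectors form a basis of $V$. So there is a unique invertible linear map $L:V\to V$ with $L(\vect{e}_i)=\vect{a}_i$, $L(\vect{e}_{i+3})=\vect{b}_i$, $L(\vect{e}_{i+6})=\vect{c}_i$. This $L$ maps the standard $A,B,C,D_1,D_2,D_3$ onto the chosen ones, since the $D_i$ are spanned by corresponding basis vectors. A linear bijection maps cosets to cosets, $L(\vect{u}+U)=L\vect{u}+L(U)$. Therefore $L$ carries each standard coset family bijectively onto the corresponding chosen family.

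Regarded as a permutation of the $512$ qubit labels, $L$ thus transforms the standard $\HXB,\HZB$ into the chosen ones, up to row reordering, i.e.\ $M(L(U))$ equals $M(U)$ with columns permuted by $L$ and rows permuted accordingly. Composing with the first part gives permutation-equivalence to SPC(3).

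\emph{Main obstacle.} There is no genuine difficulty. The only point needing care is the precise reading of the SPC(3) $Z$-check description ("fixing all coordinates except one aligned coordinate position"). I would state explicitly that this means varying $(\alpha_i,\beta_i,\gamma_i)$ for one fixed $i$, and match it to the cosets of $D_i$. I would also note that the counts ($64$ checks per family) agree, so the correspondence is a bijection rather than an inclusion.
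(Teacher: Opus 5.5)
Your proposal is correct and follows essentially the same route as the paper. In the standard basis you identify the coset families of $A,B,C$ and $D_1,D_2,D_3$ with the SPC(3) $X$- and $Z$-check families, and you handle an arbitrary choice via the unique linear map sending $\vect{e}_1,\ldots,\vect{e}_9$ to $\vect{a}_1,\ldots,\vect{c}_3$. Your count of $64$ cosets per family and the relation $M(L(U))_{L(U'),L\vect{x}}=M(U)_{U',\vect{x}}$ spell out details the paper leaves implicit.
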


\begin{proof}
The $X$-checks of SPC(3) are the $8$-point sets obtained by fixing two of the
three blocks and varying the remaining block freely. In the standard-basis
realization of Example~\ref{ex:standard-basis}, this gives exactly the affine
cosets of $A,B,C$: the cosets of $A$ are obtained by fixing $(\beta,\gamma)$
and varying $\alpha$, the cosets of $B$ are obtained by fixing $(\alpha,\gamma)$
and varying $\beta$, and the cosets of $C$ are obtained by fixing
$(\alpha,\beta)$ and varying $\gamma$. Therefore the three $X$-check families
agree.

The $Z$-checks of SPC(3) are the $8$-point sets obtained by fixing all
coordinates except one aligned coordinate position across the three blocks and
varying those three coordinates freely. In the same standard basis, these are
exactly the affine cosets of
\[
D_1=\langle \vect{e}_1,\vect{e}_4,\vect{e}_7\rangle,\qquad
D_2=\langle \vect{e}_2,\vect{e}_5,\vect{e}_8\rangle,\qquad
D_3=\langle \vect{e}_3,\vect{e}_6,\vect{e}_9\rangle .
\]
Indeed, a coset of $D_1$ is obtained by fixing
$(\alpha_2,\alpha_3,\beta_2,\beta_3,\gamma_2,\gamma_3)$ and varying
$(\alpha_1,\beta_1,\gamma_1)$; the cases of $D_2$ and $D_3$ are identical with
the coordinate position shifted. Therefore the three $Z$-check families also
agree.

Thus, in the standard-basis realization, our construction and SPC(3) have the
same row supports on both sides. Changing the ordering of qubits or checks only
applies permutation matrices to the columns or rows, so the resulting CSS code
is the same up to ordering conventions.

For a general choice of $A,B,C$ and ordered bases, let
\[
T\in \mathrm{GL}(V)
\]
be the unique linear isomorphism satisfying
\[
T(\vect{e}_1)=\vect{a}_1,\ \ldots,\ T(\vect{e}_9)=\vect{c}_3 .
\]
Then $T$ maps the standard coordinate subspaces to $A,B,C$, and therefore maps
the standard aligned subspaces to $D_1,D_2,D_3$. Since $T$ is a bijection on the
$512$ points of $V$, it induces a permutation of qubit coordinates. Hence the
general base code is permutation-equivalent to the standard-basis realization,
and therefore to SPC(3).
\end{proof}

\section{Computer-Assisted Distance Lower Bounds}\label{sec:distance-enumeration}
For a binary CSS pair, the distances are the minimum weights in
$\ker H_Z\setminus\operatorname{row}(H_X)$ and
$\ker H_X\setminus\operatorname{row}(H_Z)$, respectively.
The computation tests membership in the stabilizer row space by exact binary elimination.
Light stabilizers are retained in the code and are not mistaken for logical operators.
Since every column of each matrix has odd weight three, summing all check equations shows that every kernel vector has even weight.

A minimum-weight nontrivial logical support can be taken to be connected under any pairing of its incident edges at each check.
Indeed, each component of a disconnected pairing is itself a kernel support.
If all components are stabilizers, so is their sum; otherwise a smaller nontrivial component contradicts minimality.
Similarly, a minimum nontrivial support cannot strictly contain a nonzero stabilizer support, since adding that stabilizer removes its support and decreases the weight.
These observations justify restricting the search to minimal connected supports and stopping a branch when its current support is already a stabilizer.

Each column meets exactly one check in each of three check-row groups.
Pairing support variables at each check gives three colored perfect matchings on the support.
The absence of Tanner 4- and 6-cycles implies that their union is a simple triangle-free cubic graph.
This also excludes nonzero kernel supports of weights two and four.
Pairings at checks meeting four or more support variables are included.
We enumerate connected graphs of this form, retaining one representative of each color-preserving isomorphism class, and embed them into the actual check matrix.
The first two matchings are enumerated as alternating even cycles, and the third matching is enumerated over the remaining admissible pairs.
At weights $6,8,10,12,14,16$, the complete lists contain
$1,10,22,226,1838,25375$ patterns, respectively.
The embedding procedure branches over compatible columns and rejects repeated columns.
With two assigned neighbors of different colors, the absence of a Tanner 4-cycle gives at most one compatible next column.
Every completed embedding is tested for nonmembership in the stabilizer row space.

The simultaneous cyclic shift of all lift indices preserves both parity-check matrices up to row permutations.
It therefore preserves the corresponding stabilizer row spaces as well as the kernels.
The search uses one root representative per base column under this verified symmetry.
All candidate supports are checked on the full lifted matrices.
Only fully completed weight levels contribute to a lower bound; a time-limited partial search contributes none.
For this $P=32$ instance, all patterns through weight $16$ were processed on both sides using $512$ root representatives. No nontrivial logical support was found. Consequently $d_X,d_Z\ge18$.
The ancillary distance records include the input coefficients, source code, complete pattern lists where applicable, completed-weight logs, and file hashes.
The lower bounds are computer-assisted exhaustive-search results.
The explicit upper-bound representatives can be verified separately by syndrome and stabilizer checks.

\bibliographystyle{IEEEtran}
\bibliography{refs}

\end{document}